\newtheorem{theorem}{Theorem}
\newtheorem{lemma}[theorem]{Lemma}
\newenvironment{proof}{\noindent\textbf{Proof: }\ignorespaces}
  {\hspace*{\fill}$\Box$\medskip}
\def\A{{\mathcal{A}}}
\title{De-amortizing Binary Search Trees}
\author{Prosenjit Bose, S\'ebastien Collette, Rolf Fagerberg and
  Stefan Langerman}
\author{
Prosenjit Bose\thanks{School of Computer Science, Carleton
University. Email: \texttt{jit@scs.carleton.ca}. Research supported in
part by NSERC.}
\and
S\'ebastien Collette\thanks{Charg\'e de recherches du F.R.S.-FNRS,
  Département d'Informatique, Universit\'e Libre de Bruxelles. 
Email: \texttt{secollet@ulb.ac.be}.}
\and
Rolf Fagerberg\thanks{Department of Mathematics and Computer Science,
University of Southern Denmark. Email: \texttt{rolf@imada.sdu.dk}.
Partially supported by
the Danish Council for Independent Research, Natural Sciences.}
\and 
Stefan Langerman\thanks{Ma{\^\i}tre de recherches du F.R.S.-FNRS,
  Département d'Informatique, Universit\'e Libre de Bruxelles. 
Email: \texttt{stefan.langerman@ulb.ac.be}.}
}
\date{}
\begin{document}
\maketitle
\abstract{
We present a general method for de-amortizing essentially any
Binary Search Tree (BST) algorithm. In particular, by transforming Splay
Trees, our method produces a BST that has the same asymptotic cost as Splay
Trees on any access sequence while performing each search in $O(\log n)$
worst case time.  By transforming Multi-Splay Trees, we obtain a BST that
is $O(\log \log n)$ competitive, satisfies the scanning theorem, the static
optimality theorem, the static finger theorem, the working set theorem,
and performs each search in $O(\log n)$ worst case time.
Moreover, we prove that if there is a dynamically optimal BST algorithm,
then there is a dynamically optimal BST algorithm that answers every search
in $O(\log n)$ worst case time.
}

\thispagestyle{empty}

\newpage

\setcounter{page}{1}

\section{Introduction}

Over half a century since the discovery of rotation-based Binary Search
Trees, their exact performance is still not fully understood.
The very first works on BST focused on maintaining the tree balanced
($O(\log n)$ height and search time) after performing insertions and
deletions  \cite{avl,redblack}, or guaranteeing better average case bounds
for searches with known distributions \cite{optimum1}.

By introducing splay trees \cite{splay}, Sleator and Tarjan proposed an
alternate view of the problem, where instead of looking at the cost of
individual searches, it is the entire cost of a sequence of accesses
which is bounded, using amortized analysis.

The purpose of this article is to show that the two approaches are not
exclusive---i.e., that it is possible to combine the good amortized
performances of self-adjusting and other adaptive BST with strong worst
case guarantees for individual searches.

\paragraph{The BST Model.}
In order to describe accurately our results, we choose one BST model among
several existing standard variants, most of which are asymptotically
equivalent. In line with previous work, we will not consider insertions and
deletions. Hence, our BST model consists of a binary search tree $T$
containing the $n$ distinct keys $\{1,2,\ldots,n\}$ with their natural
order. The position of a finger, initially at the root of $T$, is
maintained, and the following two \emph{BST operations}, each of unit cost,
are allowed: 1) moving the finger from a node to its parent or to one of
its children, and 2) performing a rotation between the node pointed to by
the finger and its parent.

Given the current tree~$T$ and the current finger position, an
\emph{access} to a key $x$ is a list of BST operations (finger movements
and rotations), during which the finger position is at the node containing
$x$ at least once.

For an input sequence $S=\langle s_1,s_2,\ldots,s_m\rangle$ of keys to be accessed, a
BST algorithm $\A$ that \emph{realizes} $S$ returns a list $\A(S)$ of BST
operations for accessing the keys $s_1, s_2,\ldots$ in that order---that
is, where $S$ is a subsequence of the sequence of keys pointed to by the
finger during the execution of $\A(S)$.  An \emph{offline} algorithm $\A$
is given the entire sequence $S$ and the starting tree $T$ as input and
then outputs the sequence of operations $\A(S)$, while an \emph{online}
algorithm is fed the keys from~$S$ one by one and must output the BST
operations for the access of one key before the next key is given. More
formally, A is online if $\A(S)$ is a prefix of $\A(S')$ whenever $S$ is a
prefix of $S'$. The \emph{cost} of $\A(S)$ is the number of BST operations
it contains.

Note that the model, as all the standard variants of the BST model used in
competitive analysis of online BST algorithms, only requires the algorithm
to list the BST operations~$\A(S)$ to be performed (see, e.g, \cite{wilber89}). In particular, the
model does not restrict how those operations are generated, what auxiliary
memory is used in order to generated them, or even how much time is used to
generate them.

Of course, real-world implementations of practical BST algorithms have some
sensible limits on their time and space usage. In fact, almost all BST
implementations in the literature besides adhering to the standard BST
model described above
also have the following additional features: they work in the pointer
machine model, use no more space than the tree itself plus $O(1)$ words of
balance information in each node of the tree and $O(1)$ extra working
variables, and generate their access sequence~$\A(S)$ in time proportional
to the BST model cost of $\A(S)$. The majority of this paper is devoted to
showing how to de-amortize BST algorithms, with a method working in the
standard BST model. As a final step, we show how to extend the method to
maintain the additional features just listed, should the BST algorithm
being de-amortized have these.
 
Denote by $OPT$ the best offline algorithm, that is, $OPT(S)$ is a
shortest possible list of operations that realizes $S$. An algorithm $\A$
(online or offline) is {\em $f(n)$-competitive} if we have $\A(S) =
O(f(n)\cdot OPT(S))$ for all sequences~$S$. It is {\em dynamically
optimal} if it is $O(1)$-competitive.

\paragraph{Prior works}
The study of self-adjusting BSTs to minimize the overall cost over a
sequence of accesses was initiated by Allen and Munro \cite{allen}
with their analysis of the move-to-root and the simple exchange
heuristics, and then by Sleator and Tarjan with the introduction of
Splay trees \cite{splay}, which they conjectured to be dynamically
optimal.  They show how the running time of Splay trees can be upper
bounded in several ways as a function of the access sequence. They
prove the \emph{balance theorem} (accesses run in $O(\log n)$
amortized), the \emph{static optimality theorem} (any sequence of
accesses runs within a constant factor of the time to run it on the
best possible static tree for that sequence; in particular it reaches
the entropy bound), the static finger theorem (access $x$ runs
in $O(\log d(x,f))$, where $d(x,f)$ is the number of keys between the
query item $x$ and any fixed \emph{finger} element $f$), the
\emph{working set theorem} (access $x$ runs in time $O(\log w(x))$
where $w(x)$ is the number of distinct elements accessed since the
previous access to $x$), and the \emph{scanning theorem} (accessing
all nodes in symmetric order takes time $O(n)$). They also conjectured
the \emph{dynamic finger theorem} (access to $y$ runs in amortized
$O(\log d(x,y))$ where $x$ is the previous item in the access
sequence), which was subsequently proved by Cole \cite{cole,cole2}.
All bounds above are amortized.

On another front, Wilber \cite{wilber89} gave a formal analysis of
several variants of the BST model, providing equivalence reductions
between them, and provided two lower bounds on the number of
operations that any BST algorithm must perform for a given sequence.
In particular, he proved that the bit reversal sequence requires
$\Omega(\log n)$ amortized operations per access. 
These lower bounds were recently generalized
in \cite{rectcover,BST_SODA2009}. 
Splay trees were also shown to be \emph{key independent optimal}
\cite{keyindependent}, that is, they are $O(1)$-competitive if the order of the
keys is arbitrary or random, and that they are $O(1)$-competitive with
respect to a wide class of balanced BST algorithms \cite{Georgakopoulos200464}.

New bounds have been designed: the \emph{queueish} bound (opposite of the
working set bound: the number of elements \emph{not} accessed since
the last access to $x$) was shown not to be achievable by any BST algorithm
\cite{queaps}. Recent papers have attempted to engineer a BST that satisfies the
unified property, a bound that implies both the dynamic finger and the working set
bound \cite{unified,unified2}. The \emph{skip-splay trees} \cite{skipsplay}
perform each access within a multiplicative factor $O(\log \log n)$ of
the unified bound, amortized. The \emph{layered working set trees} \cite{workingsettrees}
are BSTs that achieve the working set bound worst case. By combining
it with the skip-splay structure, the authors show how to achieve the
unified bound, amortized, with an additive cost of $O(\log \log n)$. 

The first significant breakthrough on the competitive analysis of BST
algorithms came with the invention of \emph{tango trees} \cite{tangotrees}, the first
provably $O(\log\log n)$-competitive BST. This result was subsequently
improved independently by the \emph{multi-splay trees}
\cite{multisplay} and the \emph{chain-splay trees} \cite{chain-splay} which both offer
the additional guarantee of performing each access in $O(\log n)$
amortized time. Further properties of multi-splay trees were proved in
\cite{multisplayprop}, where they were shown to satisfy static
optimality, the static finger property, the working set property, and
key-independent optimality. They further satisfy the dequeue property
which is not known to be satisfied by splay trees.

In recent years, the question was raised as to whether the good
amortized properties could be reconciled with the $O(\log n)$ worst
case bounds satisfied by well balanced trees such as AVL or red-black
trees.
Such results were known for static trees \cite{balancestatic}, however
recent works gave indication that strong balance constraints at every
node forces the working set bound to be an amortized lower bound, thus forbidding
any such tree to have stronger properties such as the dynamic finger property
\cite{skiplists_SODA08} (the proof was given for self-adjusting
skip-lists and B-trees, however the proofs can easily be adapted to
BST with balance constraints at every node). However, it remained open
whether relaxing the balance condition to just bounding the height of
the tree would be compatible with obtaining better amortized performances.
In \cite{competworstcase}, a BST based on Tango trees \cite{tangotrees} is engineered to be both
$O(\log\log n)$-competitive and guarantees $O(\log n)$ worst case
access time for each access. However, this structure is unlikely to
possess all the other desirable properties of Splay trees.

\paragraph{Our results}

In this article we show that it is possible to automatically transform \emph{any}
BST algorithm into one that provides worst case time guarantees per access
while keeping the same asymptotic amortized running times. Our core result
shows how to keep a BST balanced while losing only a constant factor in the
running time:
\begin{itemize}
\item Any BST algorithm $\A$ on tree $T$ can be transformed into a BST
algorithm $\A'$ on a tree $T'$ whose amortized cost is within a constant
factor of the original algorithm, and for which the depth of $T'$ is always
$O(\log n)$. If $\A$ is online, so is $\A'$.
\end{itemize}
Using this, we then show how to de-amortize the BST and answer each query
in $O(\log n)$ worst case cost:
\begin{itemize}
\item Any BST algorithm $\A$ on tree $T$ can be transformed into a BST
algorithm $\A''$ on tree $T''$ such that for any access sequence $S$,
$|\A''(S)| = O(|\A(S)|)$ and each access to a node is performed in $O(\log
n)$ operations worst case. If $\A$ is online, so is $\A''$.
\end{itemize}

Finally, we show that we can extend the method to maintain the additional
features of real-world online BST algorithms described above, in a way
which turns amortized upper bounds on the BST algorithm into worst case
performance per access. In particular, we have:

\begin{itemize}
\item Any online BST algorithm $\A$ on tree $T$ that performs $k$ accesses
in $O((n+k)\log n)$ operations can be transformed into an online BST
algorithm $\A'''$ on tree $T'''$ such that for any access sequence $S$,
$|\A'''(S)| = O(|\A(S)|)$ and each access to a node is performed in $O(\log
n)$ operations worst case.
If $\A$ works in the pointer machine model, with working space being $O(1)$
words of information in the nodes and $O(1)$ global working variables, and
computes each access to a key in time proportional to the number of BST
operation of the access, then so does $\A'''$.

\end{itemize}

Applying this transformation to Splay trees, we obtain a BST that
executes every sequence within a constant factor of the Splay tree and
thus satisfies the scanning theorem, the working set property, static
optimality, the key-independent optimality, the static finger
property, the dynamic finger property, and that performs each access
in $O(\log n)$ worst case. 
By transforming Multi-Splay Trees, we obtain a BST that
is $O(\log \log n)$ competitive, satisfies the scanning theorem, the
working set property, static optimality, the key-independent
optimality, the static finger property, and
performs each search in $O(\log n)$ worst case time. 
Furthermore, if there is a dynamically optimal BST algorithm, then
there is one that additionally performs every search in $O(\log n)$
operations worst case.

\paragraph{Structure of paper}

In the next section we show how to implement a stack as a binary
search tree with bounded height. This will be used as a building block
in Section~\ref{sec:simulation} to simulate the operations of any BST
using another BST of bounded height. Finally, we show how to use this
rebalanced tree to de-amortize the BST algorithm in Section~\ref{sec:de-amortization}.

\section{Pop-Tarts}

We start by implementing a stack using a balanced BST. We differentiate
internal nodes, which always have two children, and leaves which have no
children (leaves can also be seen as empty pointers). In order to fit
the stack data structure in the BST model, we assume that nodes to be
pushed onto the stack appear as the parent of the root of the current
stack, and that nodes are pushed onto the stack in decreasing key order
(that is, after the push operation the old stack is the right child of
the newly inserted node, and its left child is a leaf). Our later
application of the stack structure fulfils these assumptions. An empty
stack is composed of one leaf.  The structure will maintain the
invariant that the left child of the root is always a leaf, to allow for
easy pop operations. After each push or pop operation, the structure is
allowed to perform a sequence of operations in the BST model (finger
movements and rotations), and at the end of the sequence, the finger is
back at the root. Leaves can have a weight associated to them, and we
use the convention that internal nodes all have weight 1 (it would not
be difficult to generalize these structures to support arbitrary
internal weights, however this is not necessary for our application).

A BST implementing a stack in this manner we call a
\emph{Pop-tart}\footnote{Pop-Tarts are a line of crazy good~\cite{crazygood} breakfast products that {\em pop} out of the toaster, which remind us of popping a stack. Pop-tart is a trademark of the Kellogg
Company.}. A pop-tart is \emph{good} if push and pop operations are
performed in $O(1)$ amortized time and $O(\log n)$ worst-case time.  It
is \emph{crazy good}~\cite{crazygood} if it is good and the depth of every leaf of weight
$w$ is $O(\log (W/w))$, where $W$ is the
total weight of all leaves in the pop-tart, or $O(\log n)$ for an
unweighted pop-tart with $n$ leaves\footnote{We slightly abuse the big-Oh notation and write
$O(\log(W/w))$ to mean a function which is smaller than $c\log(W/w)+d$
for some constants $c$ and $d$.}.

In the remainder of this section, we will describe three pop-tart
structures. The first two lay down ground concepts that will be used to
construct the third pop-tart (Chocolate), which is always crazy good.

\paragraph{Vanilla Pop-Tart.}
Implementing a good pop-tart is easy. In fact, performing no BST
operations after each push or pop operation will produce a linear tree
with exactly $O(1)$ time per operation. This elementary implementation
is called \emph{Vanilla Pop-Tart}.  A vanilla pop-tart will be
crazy-good if the weight of each pushed leaf is always larger than the
total weight of all other leaves in the pop-tart.

\begin{lemma}
The Vanilla Pop-Tart is crazy good if nodes are added in decreasing key
order and new leaves have weight larger or equal to the total weight of
all other leaves in the pop-tart.  That is, it uses $O(1)$ time per push
and pop operation and the depth of a leaf of weight $w$ is at most
$1+\log W/w$ where $W$ is the total weight of all leaves in the
pop-tart.
\end{lemma}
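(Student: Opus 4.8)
The plan is to follow the shape of the tree through pushes and pops, maintain a simple weight invariant, and then read off the depth bound from a geometric‑series computation. First I would observe that, because the Vanilla Pop‑Tart performs no rotations, the tree at any moment is completely determined by the current stack: it is a right path of internal (stack) nodes $v_1,v_2,\ldots,v_m$, with $v_1$ the root (the top of the stack) and $v_{i+1}$ the right child of $v_i$, where each $v_i$ has a leaf $\lambda_i$ as its left child and the right child of $v_m$ is the single ``empty stack'' leaf $\lambda_0$. Writing $\omega_i$ for the weight of $\lambda_i$, the depth of $\lambda_i$ is exactly $i$ and the depth of $\lambda_0$ is $m$, and $W=\sum_{i=0}^{m}\omega_i$. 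A push only adds a new root (with its leaf) on top, and a pop only deletes the current root together with its leaf child; in particular a pop never reorders the remaining nodes and only decreases depths, so it suffices to establish the depth bound after pushes.

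Next I would maintain the invariant that $\omega_i \ge \omega_{i+1}+\omega_{i+2}+\cdots+\omega_m+\omega_0$ for every $i$ — that is, every spine leaf weighs at least as much as all leaves strictly below it combined. This holds vacuously for the empty pop‑tart; it is preserved by a pop, since after relabelling the surviving inequalities are a subset of the old ones; and it is preserved by a push, since the hypothesis of the lemma states that the new leaf's weight is at least the total weight of all leaves currently present, which is precisely the $i=1$ instance of the invariant for the new tree, while the remaining instances are the old invariant with shifted indices.

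Then I would convert the invariant into the stated bound. Let $S_i=\omega_i+\omega_{i+1}+\cdots+\omega_m+\omega_0$ be the total weight of $\lambda_i$ together with everything below it on the spine, so that $S_1=W$. The invariant says $\omega_i\ge S_{i+1}$, hence $S_i=\omega_i+S_{i+1}\ge 2S_{i+1}$, and unrolling gives $S_i\le W/2^{\,i-1}$ for all $i$. Since $\omega_i\le S_i$, we get $2^{\,i-1}\le W/\omega_i$, i.e. the depth $i$ of $\lambda_i$ satisfies $i\le 1+\log(W/\omega_i)$; the leaf $\lambda_0$ at depth $m$ is handled identically from $\omega_0=S_{m+1}\le W/2^{m}$. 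The $O(1)$ time bound per operation is immediate, since a push or a pop touches only the top of the spine and performs no rotations, so the finger needs only $O(1)$ moves to return to the root.

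The argument has no serious obstacle; the only point requiring care is the interaction between the weight condition, which is stated at push time with respect to \emph{all} leaves then present, and the pops that may later remove some of those leaves. The invariant‑maintenance framing above handles this cleanly: any extra (later‑popped) leaves present at a push only make $\omega_i\ge S_{i+1}$ easier to satisfy, and each pop preserves both the spine shape and the relabelled invariant, so the geometric decay of the $S_i$ holds at every moment irrespective of the push/pop history.
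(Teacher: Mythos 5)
Your proof is correct and rests on the same core computation as the paper's: the last-added leaf weighs at least the total weight of everything beneath it, which forces the subtree weights along the right spine to halve at each level, giving the $1+\log(W/w)$ depth bound. The paper phrases this as a one-step structural induction on the right subtree of the root (weight $W'\le W/2$, plus one extra level of depth), while you unroll the same inequality $S_i\ge 2S_{i+1}$ explicitly and package it as an invariant maintained under both pushes and pops; your framing is slightly more careful about why the bound persists after pops, but the argument is essentially the same.
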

\begin{proof}
The proof is by induction. If the pop-tart contains one leaf, then it is
at depth 0, this covers the base case.  Assume by induction that the
lemma is true for the right subtree of the root, which is of total
weight $W'$. Then the left child of the root is the last added leaf and
it has weight at least $W'$, thus, $W \geq 2W'$. The left child of the
root is at depth $1 \leq 1+\log{W/w}$. Any other leaf in the tree by
induction is at depth at most $2+\log{W'/w} \leq 1+\log{W/w}$.
\end{proof}

\paragraph{Cherry Pop-Tart.}
We now describe the \emph{Cherry Pop-Tart}, which is a crazy good
pop-tart if all leaves have weight 1. Although Cherry Pop-tarts are not
used explicitly in this paper, they serve as a warm up, introducing some
key concepts needed to define the Chocolate Pop-tart structure, which is
used later.

The algorithm used is a variant of a 2-4 tree implemented as a BST.  On
a high level, it may be viewed as reversing edges on the leftmost path
in a red-black tree, and then having a permanent finger at the leftmost
internal node (effectively making it the root of the BST).

In greater detail: The Cherry Pop-tart is a BST with the nodes on the
right path of the tree grouped into layers. A layer consists of
consecutive nodes on the right path, and the left subtrees of these
nodes are called \emph{crumbs}. The right child of the last node in the
layer is the top node of the next layer (except for the last layer,
where it is the original leaf of the initial empty stack). By definition
of BSTs, the layers are linearly ordered, that is, all keys in a layer
are smaller than the keys in the next layer.

We number the layers as follows: the layer containing the root is
layer~0, the next one along the right path is layer~1, and so on.  We
maintain the invariants that each layer has between 1 and 3 nodes on the
right path (hence that many crumbs), and that the crumbs pointed to by
layer~$i$ (called \emph{$i$-crumbs}) are perfectly balanced trees
containing exactly $2^i$ leaves. See Figure~\ref{fig:cherry-pop-tart}.
\begin{figure}
\begin{center}
\includegraphics[scale=0.50]{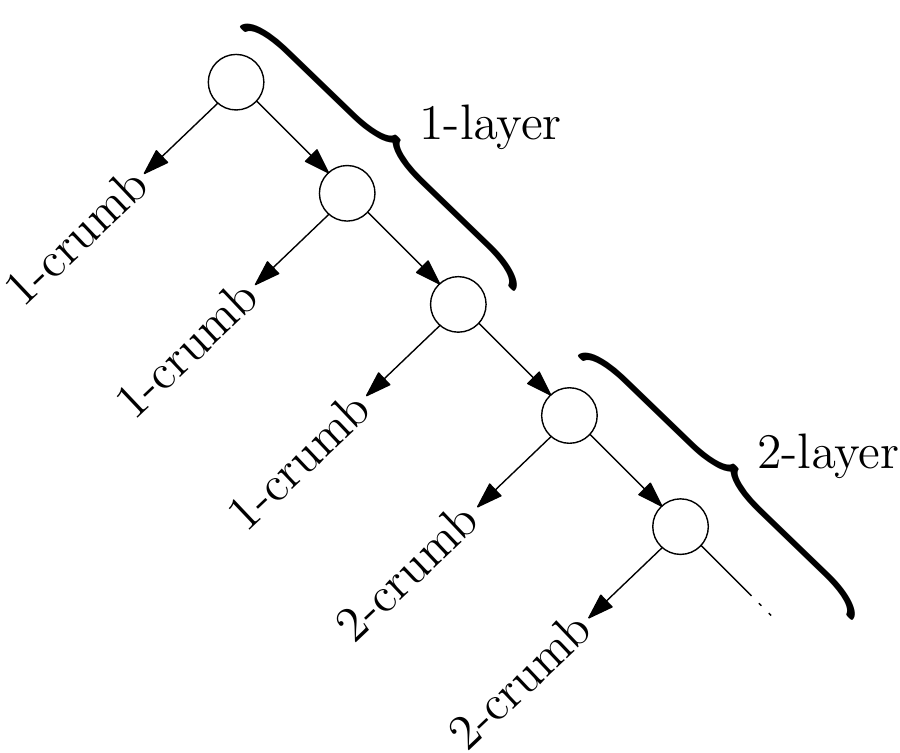} 
\end{center}
\caption{layers and crumbs of a Cherry Pop-tart.}
\label{fig:cherry-pop-tart}
\end{figure}

The invariant is true for a pop-tart containing one node: that node is
layer~0 and it points to one $0$-crumb (containing one leaf). When a new
node is pushed as the parent of the root, it is added to
layer~0. Layer~0 therefore has one more node and one more
$0$-crumb. Either the new layer~0 still has no more than 3 crumbs,
maintaining the invariant, or layer~0 now has 4 $0$-crumbs (each
composed of exactly one leaf). In this case, we perform a left rotation
between the last two nodes of the layer. This replaces the last two
nodes of the layer with one node whose left pointer points to a
$1$-crumb. We now move that node from layer~0 to layer~1. See
Figure~\ref{fig:cherry-pop-tart-rot}.
\begin{figure}
\begin{center}
\includegraphics[scale=0.50]{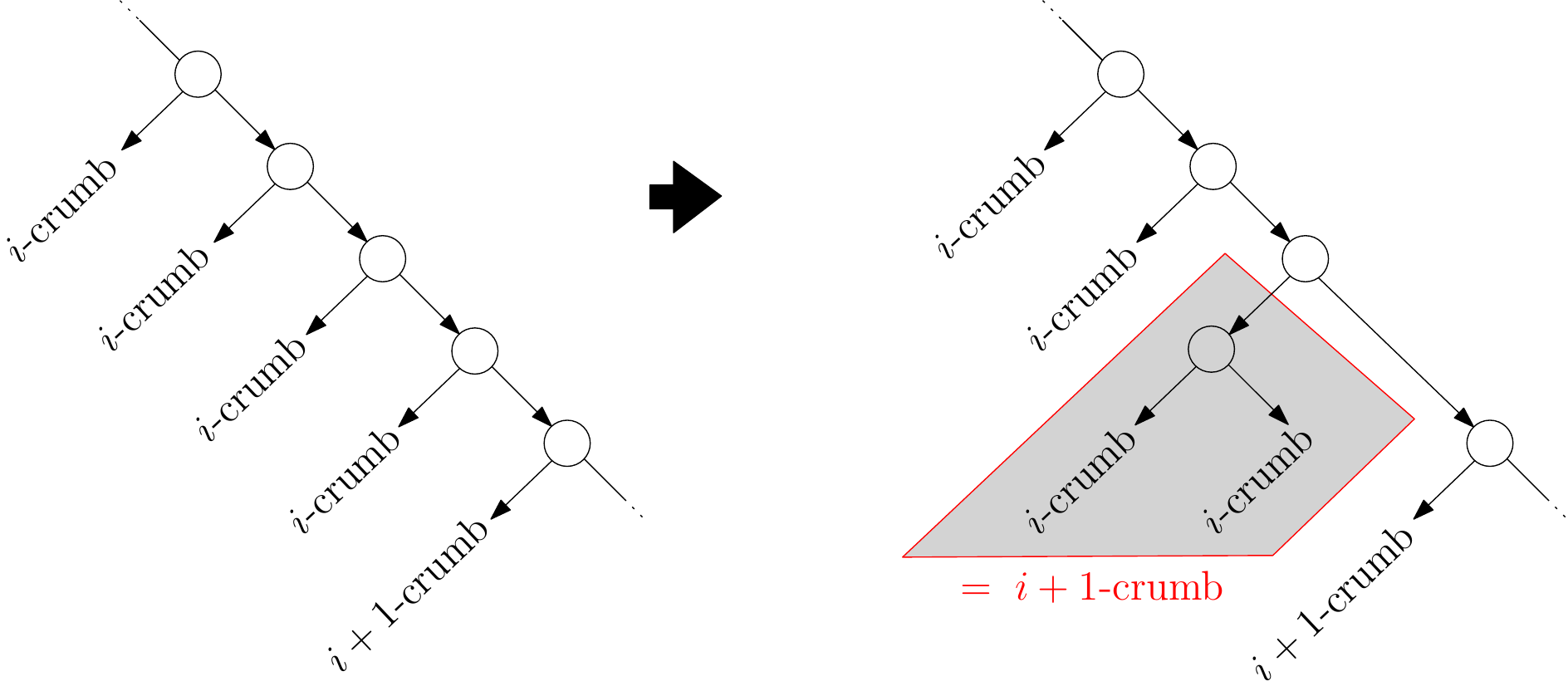} 
\end{center}
\caption{Restoring the Cherry Pop-tart invariant at level~$i$.}
\label{fig:cherry-pop-tart-rot}
\end{figure} 
Again, the reconfiguration could either stop there or ripple down
further.  In general, as a node is added as the parent of the first node
in layer $i$, either layer $i$ still has no more than 3
$i$-crumbs, or we preform a rotation on the node between the last two
crumbs, forming a $(i+1)$-crumb with twice as many leaves which is
inserted into layer $i+1$. A pop operation works symmetrically, by
removing the first node of layer~0 (whose left child is a leaf) and
restoring the invariant, that is, if layer~0 contains no more nodes,
we perform a right rotation on the first node of layer~1, transforming
it into two nodes that are moved into layer~0.  If layer~1 is now empty,
we repeat the operation on the first node of layer~2 and so on.

\begin{lemma}
The Cherry Pop-Tart is crazy good if nodes are added in decreasing
key order and all leaves have weight~1. That
is, it uses $O(1)$ amortized time and $O(\log n)$ worst case time per
push and pop operation and its tree has height $O(\log n)$. 
\end{lemma}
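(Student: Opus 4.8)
We need to show three things: (1) O(log n) worst-case time per operation, (2) O(1) amortized time per operation, and (3) O(log n) height.

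For the height: The invariant says i-crumbs are perfectly balanced with 2^i leaves, so depth 1+i. Layers have 1-3 nodes each, so going down one layer adds at most 3 to the right-path depth. Number of layers: if there are n leaves total, and layer i contributes between 1 and 3 crumbs each with 2^i leaves, so total leaves is at least sum over layers of 2^i, which means roughly log n layers. So depth is O(log n).

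For worst-case time: Each push/pop triggers a "ripple" of rotations, one per layer. Since there are O(log n) layers, worst case is O(log n) rotations plus finger movements back to root, which is O(log n).

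For amortized time: This is the classic 2-4 tree / binary counter argument. Use a potential function. When you push and it ripples k levels, you did O(k) work, but k-1 of those levels went from 4 crumbs down to... let me think. Actually the standard argument: a layer with 2 crumbs is "comfortable", layers with 1 or 3 are "close to splitting/merging". Potential = number of layers with 1 or 3 nodes (or something similar). Pushes and pops that ripple far must have passed through many "extreme" layers, releasing potential.

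So my proof sketch:

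=== BEGIN PROOF PROPOSAL ===

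The plan is to establish the three claimed properties—$O(\log n)$ height, $O(\log n)$ worst-case time, and $O(1)$ amortized time—largely by the standard analysis of $2$-$4$ trees (equivalently, binary counters), adapted to the layered-BST encoding described above.

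First I would verify the \emph{height bound}. By the invariant, an $i$-crumb is a perfectly balanced tree with $2^i$ leaves, so its leaves sit at depth $i$ below the node pointing to it, hence at depth at most $1 + i + 3\cdot(\text{layer index of that node})$ from the root, since each layer contributes between $1$ and $3$ nodes on the right path. It therefore suffices to bound the number of layers. Since layer $i$ contains at least one $i$-crumb, the total number of leaves satisfies $n \ge \sum_{i=0}^{\ell-1} 2^i = 2^\ell - 1$ when there are $\ell$ layers, so $\ell \le \log(n+1)$, and the height is $O(\log n)$. (The original leaf of the empty stack contributes a bounded additive term.)

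Next, the \emph{worst-case time} per push or pop. A push inserts a node into layer $0$ and then possibly triggers a cascade: whenever a layer reaches $4$ crumbs, a single left rotation merges its last two crumbs into one crumb of the next size, which moves up one layer; this repeats, touching each layer at most once, for a total of $O(1)$ BST operations per layer. A pop is symmetric, borrowing from successively higher layers. Since there are $O(\log n)$ layers and the finger only needs to travel down the right path and back (length $O(\log n)$), each operation costs $O(\log n)$ in the worst case.

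For the \emph{amortized bound}, I would use a potential function that charges for layers poised to cascade—for instance, $\Phi = c\cdot\#\{\text{layers with exactly }1\text{ or }3\text{ nodes on the right path}\}$ for a suitable constant $c$; a layer with $2$ nodes is ``safe.'' A push or pop that cascades through $k$ layers must, except at its last step, convert a layer from an extreme size ($3$ for pushes, $1$ for pops) into a safe size, so the $k-1$ units of cascade work are paid for by a drop of at least $(k-1)c$ in $\Phi$, while the $O(1)$ terminal step and the $O(1)$ local change at layer $0$ raise $\Phi$ by only $O(1)$; choosing $c$ to dominate the per-layer work yields $O(1)$ amortized cost. The only subtlety—the step I expect to require the most care—is checking that a cascading push indeed leaves every intermediate layer with exactly $2$ crumbs (so that $\Phi$ genuinely decreases), and dually for pops; this is precisely the ``$4 \to$ split, $2$ stays behind'' and ``$0 \to$ borrow, $2$ remains'' behavior of $2$-$4$ trees, but it must be read off carefully from the rotation rules given above, including the interaction at the boundary with the fixed leaf of the initial empty stack.

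=== END PROOF PROPOSAL ===
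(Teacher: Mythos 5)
Your proposal follows essentially the same route as the paper: the same potential function (charging layers with $1$ or $3$ nodes), the same observation that each intermediate layer in a cascade drops from potential $1$ to $0$, and the same height argument counting $O(\log n)$ layers each contributing $O(\log n)$ depth via their perfectly balanced crumbs. The subtlety you flag at the end---that each rippled layer really does settle at exactly $2$ nodes---is precisely what the paper asserts without further elaboration, so your sketch is a faithful (if slightly more cautious) rendering of the paper's own proof.
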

\begin{proof}
To show that a push or pop operation has amortized cost
$O(1)$, we assign a potential of 0 to layers with 2 nodes,
and a potential of 1 to layers with 1 or 3 nodes. A push or pop
operation has actual cost proportional to the number of layers that
had to be readjusted to restore the invariant. Each readjusted layer had
a potential of 1 before the operation (i.e., had 3 nodes before a push
or 1 node before a pop) and of 0 after the operation (i.e., has 2
nodes exactly). Therefore, the decrease of potential pays exactly for
the readjustments. The insertion or deletion in the last layer possibly
increases its potential by 1, which is the amortized cost of the
operation. Therefore, this pop-tart is good.

Since layer $i$ has at least one $i$-crumb containing $2^i$ leaves, a
pop-tart with $n$ leaves has at most $\log n$ layers, each having crumbs
of height $O(\log n)$, thus the total height of the tree is $O(\log n)$.
So in the unweighted case, this pop-tart is crazy-good.
\end{proof}

The next lemma shows that the Cherry Pop-tart is crazy-good even in
some weighted cases. 
\begin{lemma}\label{lem:cherry-pop-tart}
The Cherry Pop-Tart is crazy good if nodes are added in decreasing
key order and new leaves are added with increasing weights.
That is, it uses $O(1)$ amortized time per push and pop operation and
the depth of a leaf of weight $w$ is $O(\log W/w)$ where $W$ is the
total weight of all leaves in the pop-tart.
\end{lemma}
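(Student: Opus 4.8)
The amortized part needs no new work, so I would dispose of it first. The push and pop dynamics of the Cherry Pop-tart depend only on the combinatorial structure --- the number of nodes per layer --- and never on the leaf weights. Hence the potential function from the preceding lemma ($0$ for a layer with two nodes, $1$ for a layer with one or three nodes) still shows that each operation costs $O(1)$ amortized. The whole task therefore reduces to establishing the weighted depth bound.

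The crux is a structural observation: the layers are ordered by age. Concretely, I would prove by induction over the sequence of push/pop operations that every leaf stored in a $j$-crumb of layer $j$ was pushed earlier than every leaf stored in a $j'$-crumb with $j'<j$, and that within each layer the lower crumbs on the right path are the older ones. The reason is that a freshly created crumb always enters a layer at its \emph{top} (as the new first node of the layer), whereas the crumb that migrates from layer $j$ up into layer $j+1$ is the one formed from the last two crumbs of layer $j$, i.e.\ the two \emph{oldest} crumbs there; pops are symmetric. So the invariant ``oldest crumbs occupy the deepest layers, lowest positions'' is preserved. Since leaves are pushed with nondecreasing weights, this age ordering gives at once: every leaf in layers $0,1,\ldots,i-1$ has weight at least that of any leaf in layer $i$.

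Granting this, the depth bound is a short calculation I would carry out as follows. Fix a leaf of weight $w$ and say it lies in an $i$-crumb of layer $i$. Each layer $j<i$ contains at least one $j$-crumb, hence at least $2^{j}$ leaves, each of weight $\ge w$ by the age ordering; adding the contribution of the leaf itself, $W \ge w\sum_{j=0}^{i-1}2^{j} + w = w\,2^{i}$, so $i \le \log(W/w)$. Now the depth of the leaf is at most $3i$ edges along the right path through layers $0,\ldots,i-1$ (each has at most three right-path nodes), plus at most $3$ further edges to reach the root of the relevant $i$-crumb inside layer $i$, plus at most $i$ edges down that perfectly balanced crumb of $2^{i}$ leaves --- in total at most $4i+3 \le 4\log(W/w)+3 = O(\log(W/w))$, as required.

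One loose end is the single leaf belonging to no crumb, the original leaf of the initially empty stack: it sits at the bottom of the rightmost path, but it is the oldest leaf, so regarding it as carrying the smallest weight (e.g.\ normalizing weights so the minimum is $1$) the same count applies with $i$ the number of layers and bounds its depth by $O(\log(W/w))$ too. I expect the genuinely delicate step to be the age-ordering claim of the second paragraph: one must check carefully that through \emph{both} pushes and pops new crumbs always enter a layer on the side that keeps ``oldest at the bottom'' intact, since the depth bound collapses without it.
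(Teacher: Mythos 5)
Your proof is correct and takes essentially the same route as the paper: both rest on the observation that the leaves appear in the tree in order of (non-increasing) weight, so a leaf of weight $w$ in layer $i$ is preceded by at least $2^i-1$ heavier leaves, giving $2^i \le W/w$ and hence depth $O(\log(W/w))$; the amortized bound is unchanged from the unweighted case.

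One remark: the "age ordering" you flag as the delicate step and propose to prove by induction over the push/pop sequence actually requires no induction at all. It is an immediate consequence of the BST invariant together with the hypotheses of the lemma: the right path is sorted by key, crumbs in layer $j$ have strictly larger keys than crumbs in layer $j' < j$, and within a layer lower crumbs have larger keys. Since keys are pushed in strictly decreasing order, larger key means pushed earlier, and since weights are non-decreasing over time, pushed earlier means weight at most that of any later push. This is the paper's one-line observation that an inorder traversal meets the leaves in decreasing weight order; it holds at every moment in time for any legal BST state, independent of which rotations produced it. Your induction would go through, but it re-derives something the BST model already gives you for free.
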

\begin{proof}
We use the exact same structure as in the previous lemma. Observe that
by the conditions in the lemma, an inorder traversal of the tree will
meet the leaves in order of decreasing weight. Since $i$-crumbs
contain~$2^i$ leaves, the layer containing the $k^{th}$ heaviest leaf in
the pop-tart has index at most $\log k$, hence has crumbs of depth at
most $\log k$. So the depth of the $k^{th}$ heaviest leaf is at most
$4\log k$. If the $k^{th}$ heaviest leaf is of weight $w$, then the
total weight $W$ of all leaves in the pop-tart is at least $kw$, hence
the depth of that leaf is at most $4 \log k \leq 4\log {W/w}$. Thus, the
pop-tart is crazy good.
\end{proof}

In order to allow for arbitrary weight order, we will have to modify
slightly the data structure. We call the next structure the
\emph{Chocolate Pop-Tart}.

\paragraph{Chocolate Pop-Tart.}
Again, the structure will be decomposed into a sequence of layers whose
nodes form a right path and point to crumbs. This time, the right path
of the $i^{th}$ layer will be composed of 1 to 3 \emph{regular} nodes
whose left child is an $i$-crumb, then a \emph{next} node whose left
child points to the next layer and whose right child points to a subtree
called the \emph{icing}. This will be called the \emph{structural invariant}.  See Figure~\ref{fig:chocoloate-pop-tart}.
\begin{figure}
\begin{center}
\includegraphics[scale=1]{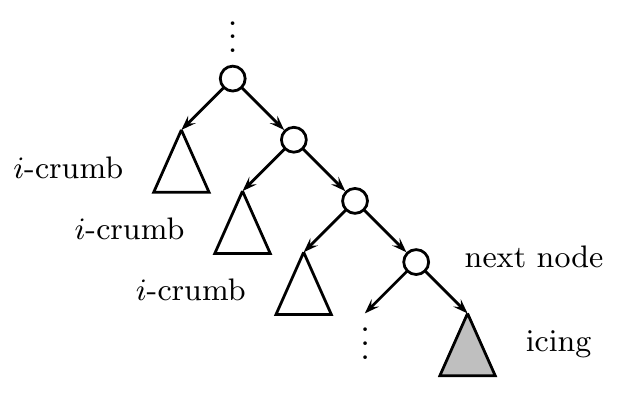} 
\end{center}
\caption{Level $i$ in the Chocolate Pop-tart.}
\label{fig:chocoloate-pop-tart}
\end{figure} 
The icing is itself a stack, implemented using a Vanilla Pop-tart (that
is, a simple linear tree), whose leaves will be
\emph{frozen}\footnote{or \emph{frosted}} subtrees of the chocolate
pop-tart. In order for the icing to be crazy-good, we will ensure that
the nodes (frosted subtrees) pushed onto it will always be at least as
heavy as the total weight of the icing.  The subtrees to be frosted and
pushed into the icing of level~$i$ will always be the next node and the
entire subtree rooted at the top node of level $i+1$.  Therefore, we
maintain the invariant that the total weight of layer $i+1$ (that is,
the the total weight of the subtree rooted at the topmost node of that
layer) is smaller than the total weight of the icing of layer $i$
(\emph{thick icing invariant}). If
violated, layer $i+1$ will be frosted and pushed into the icing, to
maintain the invariant.

The last layer, say, layer $i$, is incomplete: it is composed of 0 to 3
regular nodes, has no pointer to the next layer, and always contains an
icing as its rightmost subtree. It can only have 0 regular nodes if the
icing contains exactly one element (which is always an $i$-crumb).

As before, when a new node is pushed onto the $i$-layer (starting with
$i=0$), 
either the $i$-layer has at most 3 regular nodes, in which case we are
done, or it contains 4 regular nodes and we need to restore the structural
invariant.  We start by performing a left rotation between the two
lowest regular nodes in the layer, creating an $(i+1)$-crumb.  We have
two cases to consider. If the $i^{th}$ layer is not the last one, then
we perform a left rotation between the next node and the lowest regular
node, to move the new $(i+1)$-crumb and its node to the $(i+1)^{th}$
layer.  On the other hand, if the $i^{th}$ layer is the last one, then
it has no next node. Then the lowest regular node becomes a next node
which points to the new $(i+1)^{th}$ layer. That $(i+1)^{th}$ layer
contains 0 regular nodes, no next node and an icing which contains the
$(i+1)$-crumb as its only leaf.

Having done this, there are again two cases to consider: if the total
weight of the subtree rooted at the (new) top node of the $(i+1)^{th}$
layer is smaller than the total weight of the icing of the $i^{th}$
layer, then we proceed with the insertion of the $(i+1)$-crumb, by
restoring the structural invariant if necessary, and so on. Otherwise, we restore
the thick icing invariant by frosting the
$(i+1)^{th}$ layer without modifying it further (even if it contains now
4 regular nodes), and push it and its parent node (the next node of the
$i^{th}$ layer) into the icing of the $i^{th}$ layer. The $i^{th}$ layer
then becomes the last layer. It has no next node and two regular nodes.

The deletion operation is symmetric: when the first regular node of the
$i^{th}$ layer is deleted, either the layer still has at least one
regular node left, in which case we are done, or we have to restore the
structural invariant.  If $i$ is not the last layer, we pull two nodes and their
associated $i$-crumbs from the $(i+1)^{th}$ layer (by performing two
right rotations and possibly recursively restoring the invariant in the
$(i+1)^{th}$ layer).  If the $(i+1)^{th}$ layer is only composed of an
icing (which then contains one frosted $(i+1)$-crumb), we defrost the
icing, perform a right rotation, transforming the next layer into two
regular nodes pointing to $i$-crumbs and the $i^{th}$ layer becomes the
last one.  On the other hand, if $i$ is the last layer, then we pop a
frosted subtree from the icing (unless it contains only one leaf), and
perform a right rotation to turn the frosted subtree into one regular
node and a next node, the latter pointing to the new, unfrosted,
$(i+1)^{th}$ layer and to the remaining icing.

\begin{lemma}\label{lem:chocolate-pop-tart}
The Chocolate Pop-Tart is crazy good if nodes are added in decreasing
key order and new leaves are added with arbitrary weights.
That is, it uses $O(1)$ amortized time per push and pop operation and
the depth of a leaf of weight $w$ is $O(\log W/w)$ where $W$ is the
total weight of all leaves in the pop-tart.
\end{lemma}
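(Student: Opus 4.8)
The plan is to prove Lemma~\ref{lem:chocolate-pop-tart} by establishing two things separately: the $O(1)$ amortized cost of push/pop, and the depth bound on a leaf of weight $w$. Throughout, I will exploit the recursive structure already set up: a Chocolate Pop-tart consists of layers, each layer containing $1$--$3$ regular nodes (with $i$-crumbs as left children, each a perfectly balanced tree of $2^i$ leaves), a next node, and an icing which is itself a Vanilla Pop-tart whose leaves are frosted subtrees. The key structural facts I will use are: (i) by the thick icing invariant, each time a frosted subtree is pushed onto the icing of layer $i$ it weighs at least the current total icing weight, so by the Vanilla Pop-tart lemma the icing is crazy good; and (ii) an $i$-crumb has $2^i$ leaves, so layer $i$ carries total weight at least $2^i$ from its crumbs alone, bounding the number of layers by $\log n$ and each crumb's internal height by $O(\log n)$.

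\textbf{Amortized cost.} I would use a potential function that is the sum of a "structural" part and an "icing" part. For the structural part, as in the Cherry Pop-tart proof, assign potential $0$ to layers with $2$ regular nodes and potential $1$ to layers with $1$ or $3$ regular nodes; a cascade of rotations restoring the structural invariant through $k$ layers releases $\Omega(k)$ units, paying for the work. The new ingredient is the frosting/defrosting of a whole layer into/out of an icing: this is triggered precisely when the thick icing invariant is violated, and I would give each icing $O(1)$ potential per frosted element it holds beyond the first (or, more carefully, use the fact that a frost event is "paid for" by the growth that caused the weight imbalance — I would charge it against an amortized accounting of pushes into the layer above). The subtlety is that frosting layer $i+1$ can itself leave layer $i$ with a bad number of regular nodes and cascade, but each layer is touched $O(1)$ times per operation in the chain, so the total is still $O(1)$ amortized. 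I expect the bookkeeping here, rather than any deep idea, to be where the care is needed: showing that a single push or pop only triggers one "frost" or "defrost" per level and that the number of affected levels telescopes.

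\textbf{Depth bound.} This is the heart of the lemma and where I expect the main obstacle. Fix a leaf $\ell$ of weight $w$; let $W$ be the total weight. The leaf lives either inside some $i$-crumb hanging off layer $i$, or inside a frosted subtree stored in some icing (recursively). I would bound $\ell$'s depth by summing three contributions along the root-to-$\ell$ path: (a) the number of layers traversed along the right spine before descending; (b) if $\ell$ descends into an icing at layer $i$, its depth within that Vanilla-Pop-tart icing; and (c) the depth within the final $i$-crumb (a balanced tree of $2^i$ leaves, contributing $O(i) = O(\log(W/w))$ since layer $i$ and hence $\ell$'s subtree has weight $\ge 2^i$, and $W \ge w \cdot 2^i / (\text{size of } \ell)$ — more directly, $\ell$ sits in a balanced tree whose every leaf including the lightest contributes, and $2^i \le$ number of leaves that are "at least as relevant"). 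The crucial estimate is (b): by the thick icing invariant combined with Vanilla crazy-goodness, a frosted subtree of weight $w'$ sitting in an icing of total weight $W_{\text{icing}}$ is at depth $O(\log(W_{\text{icing}}/w'))$ within that icing; and $W_{\text{icing}}$ is at most the total weight $W$. Chaining these: each time we descend into a deeper icing or a smaller layer, the "available weight ceiling" for the part of the tree we are now in drops, and I want the sum of the per-step $\log(\cdot/\cdot)$ terms to telescope to $O(\log(W/w))$. The obstacle is that a path may alternate between descending the layer spine (where weight drops geometrically because layer $i{+}1$ is lighter than the icing of layer $i$, by the thick icing invariant) and descending into an icing; I will need to argue that along any root-to-leaf path the "current weight budget" $W_j$ forms a (weakly) geometrically decreasing sequence, so $\sum_j \log(W_j/W_{j+1}) = \log(W/w) + O(\text{path length in the geometric part})$, and separately that the non-geometric parts (the $O(1)$ regular nodes per layer, the balanced-crumb descents) are each $O(\log(W/w))$ or sum to it. Once the telescoping is set up correctly, the bound $O(\log(W/w))$ follows; making the potential function and the weight-budget sequence precise, and verifying they behave well across frost/defrost events, is the real work.
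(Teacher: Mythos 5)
Your overall plan matches the paper's proof: the same structural potential for the amortized bound, and a root-to-leaf telescoping over layers and icings (using Vanilla crazy-goodness for icings and the thick icing invariant to control weight decay) for the depth bound. However, two points deserve correction/sharpening, one in each half.

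For the amortized cost, you worry that ``frosting layer $i+1$ can itself leave layer $i$ with a bad number of regular nodes and cascade.'' It cannot: by the time an overflowed layer $i$ passes a crumb down, two of its four regular nodes have merged into an $(i+1)$-crumb, so layer $i$ already has exactly two regular nodes, and the subsequent frosting step only consumes the next node and closes the layer---it never re-enters layer $i$. Hence frosting always \emph{terminates} the cascade. Consequently no extra ``icing potential'' is needed; the push into the icing is a single $O(1)$-cost terminal event paid directly by the amortized constant, and the Cherry-style $0/1$ potential on the number of regular nodes per layer already does all the work.

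For the depth bound, you correctly identify the Vanilla bound for the icing and the telescoping of the $\log(W_j/W_{j+1})$ terms, but you do not pin down the observation that actually makes the additive overhead disappear. The issue is that each layer contributes $O(1)$ extra depth (up to three regular nodes plus a next node), and there could be $\Theta(\log n)$ layers; $\log n$ is not in general $O(\log(W/w))$. The resolution---which the paper builds into its backward induction on layer index, proving a bound of the form $i+6+7\log(W_i/w)$ measured from the root of layer~$i$---is that the thick icing invariant forces $W_{i+1} < W_i/2$ (the icing alone already outweighs the next layer's subtree, and both sit inside layer~$i$). Thus each spine step gains at least~$1$ in $\log(W_i/w)$, and multiplying the log term by a sufficiently large constant (here~$7$) lets that gain absorb the $O(1)$ additive cost per layer, including the extra depth when entering a frosted subtree inside an icing. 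Without making the factor-two weight decay explicit and tying the additive per-layer cost to it, the sum you write down bounds only the log terms, not the total depth. Once you add this, the telescoping closes exactly as in the paper.
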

\begin{proof}
We first show that the Chocolate Pop-tart is good, that is, it uses
$O(1)$ amortized time per push and pop operation.
For this, we assign a potential of 0 to layers with 2 regular nodes,
and a potential of 1 to all other layers. 
A push operation will cause a bunch of reconfigurations in successive
layers, that end in either adding a crumb to a layer that does not
overflow, or pushing an element in the icing of a layer. Either case
costs $O(1)$ amortized. As in the case of Cherry Pop-tarts, it is
easily verified that every layer that overflows had 3 regular nodes
before, and thus a potential of 1, and two regular nodes after, so a
potential of 0 (except possibly for the last rearranged layer).
Likewise, during a pop operation, the potential of a rearranged layer
(except the last one) goes from 1 to 0 since the number of regular
nodes it contains goes from 1 to 2.
Thus, the decrease of potential of a layer during a push or a pop pays
for its rearrangement, while the amortized cost of $O(1)$ pays for the
potential increase and the rearrangement in the last node and the push
in the icing if it occurs.

It now remains to prove that the depth of a node of weight $w$ is
$O(\log W/w)$. 
The proof will be by induction on the layer number. Consider the subtree
rooted at the first node of the $i^{th}$ layer and let $W_i$ be the
total weight of that subtree.
Assume by induction that at any moment in the algorithm, any leaf of weight $w$ has
depth $i+6+7\log{W_i/w}$ starting from the
root of the $i^{th}$ layer. 
We want to show that in the subtree rooted at the first node of the
$(i-1)^{th}$ layer, any leaf of weight $w$ has depth 
$(i-1)+6+7\log{W_{i-1}/w}$.
Obviously, the hypothesis is true for an $i^{th}$ layer 
that contains only an icing with one frosted $i$-crumb,
since all its leaves are at distance $i$;
this covers the base case.

For a $(i-1)^{th}$ layer, we consider the leaves located 
(i) in $(i-1)$-crumbs pointed by regular nodes, 
(ii) in the $i^{th}$ layer if it exists, and 
(iii) in the icing of the $(i-1)^{th}$ layer.
Any leaf of type (i) is at distance $\leq 3+i-1$ which is small enough.
For type (ii) leaves, notice that as long as $i$-crumbs are being moved
from the $(i-1)^{th}$ layer to the $i^{th}$ layer without being
frosted and pushed to the icing, $W_i \leq W_{i-1}/2$. Therefore, for
any leaf of weight $w$ in the subtree of the $i^{th}$ layer, the
depth of that leaf is at most 
$$4 + i + 6 +7\log {W_i/w} \leq 10+i+7\log {W_{i-1}/w} - 7
\leq (i-1) + 4 + 7\log {W_{i-1}/w}$$
which is below the desired bound.

Finally for case (iii), since the icing of the $(i-1)^{th}$ layer is
implemented as a Vanilla pop-tart and the frosted subtrees are pushed
with (total) weights always larger than all other leaves (frosted
subtrees) in the icing, the icing is crazy good, that is, a frosted
subtree of total weight $W$ will have its root at depth at most 
$5+\log {W_{i-1}/W}$. 
Let $p$ be the parent of the frosted subtree containing the node of
weight $w$, let $W_p$ be the
weight of the subtree rooted at $p$. The depth of $p$ is at most 
$4+\log {W_{i-1}/W_p}$ since the left child of every node on the right
path of the icing contains at least half of the weight of that node.  
Every frosted subtree has its first node whose left pointer points to
a possibly heavy $i$-crumb, and whose right pointer points to what
used to be the $i^{th}$ layer at some point in time. Let $W'$ be the
weight of that $i^{th}$ layer. Then $W' \leq W_p/2$ otherwise the
$i^{th}$ layer would have been frosted earlier. 
By induction, a leaf of weight $w$ in this former $i^{th}$ layer must
have depth no more than 
$$4+\log {W_{i-1}/W_p} + 2 + i + 6 + 7\log {W'/w} $$
$$\leq 12 + i + \log {W_{i-1}/W_p} + 7\log {W_p/w} - 7$$
$$\leq (i-1) + 6 + 7\log{W_{i-1}/w}$$
which is the desired bound. 
A leaf in the $i$-crumb pointed by the left pointer of the root node
of the frosted subtree has weight at most $W_p$, and its depth is
$$4+\log {W_{i-1}/W_p} + 2+ i 
\leq   (i-1) + 6 + 7\log{W_{i-1}/w}.$$
This completes the induction proof. For $i=0$, we have that any leaf
of weight $w$ has depth at most $6+7\log{W/w}$, so the chocolate
pop-tart is crazy-good for arbitrary weights. 
\end{proof}

Note that all pop-tarts described in this section can also be flipped
to maintain elements pushed in increasing order. If the cherry or
chocolate pop-tarts need to be implemented in a real-world BST,
$O(1)$ extra bits of information in each node is sufficient
for storing the function of that node (regular, next, icing, crumb). 

\section{Simulation}\label{sec:simulation}
We now show how to efficiently simulate any BST algorithm while keeping
the tree of logarithmic height. The method will work for trees with
weighted nodes as well. Let $w_i$ be the weight of the node with key $i$
and let $W=\sum_{i=1}^n w_i$.  For unweighted trees, set $w_i=1$ and
$W=n$.  We represent the tree $T$ of the original BST algorithm using a
heavy path decomposition. To construct this decomposition, we denote
every edge of $T$ as either \emph{solid} or \emph{dotted}. For each
non-leaf node, the edge to its child with largest total subtree weight
(or the left child, in case of a tie) is a solid edge, and the edge to
its other child is dotted. The solid edges form \emph{heavy paths}
connected together by dotted edges.

We simulate the original BST algorithm as follows: When its finger is at
the root of $T$, each heavy path is implemented using a pair of weighted
pop-tarts: a heavy path from node $y$ to node $x$ (with $y$ an ancestor
of $x$) is a sequence of nodes that can be decomposed into the
subsequence $L(y,x)$ of nodes smaller than $x$ on the path, and the
subsequence $R(y,x)$ of nodes larger than $x$ on the path. Note that
$L(y,x)$ is increasing, and $R(y,x)$ is decreasing. In our simulation,
the end of the path $x$ does not change, but $y$ can move up or down
along the path to the root. As $y$ moves up, the new nodes are added to
$L(y,x)$ in decreasing order, or to $R(y,x)$ in increasing order.

The sequences $L(y,x)$ and $R(y,x)$ will each be stored in the weighted
chocolate pop-tart structure described in the previous section, and
these two pop-tarts will be left and right children of $x$,
respectively, see Fig.~\ref{fig:double-pop-tart}.
\begin{figure}
\begin{center}
\includegraphics[scale=0.50]{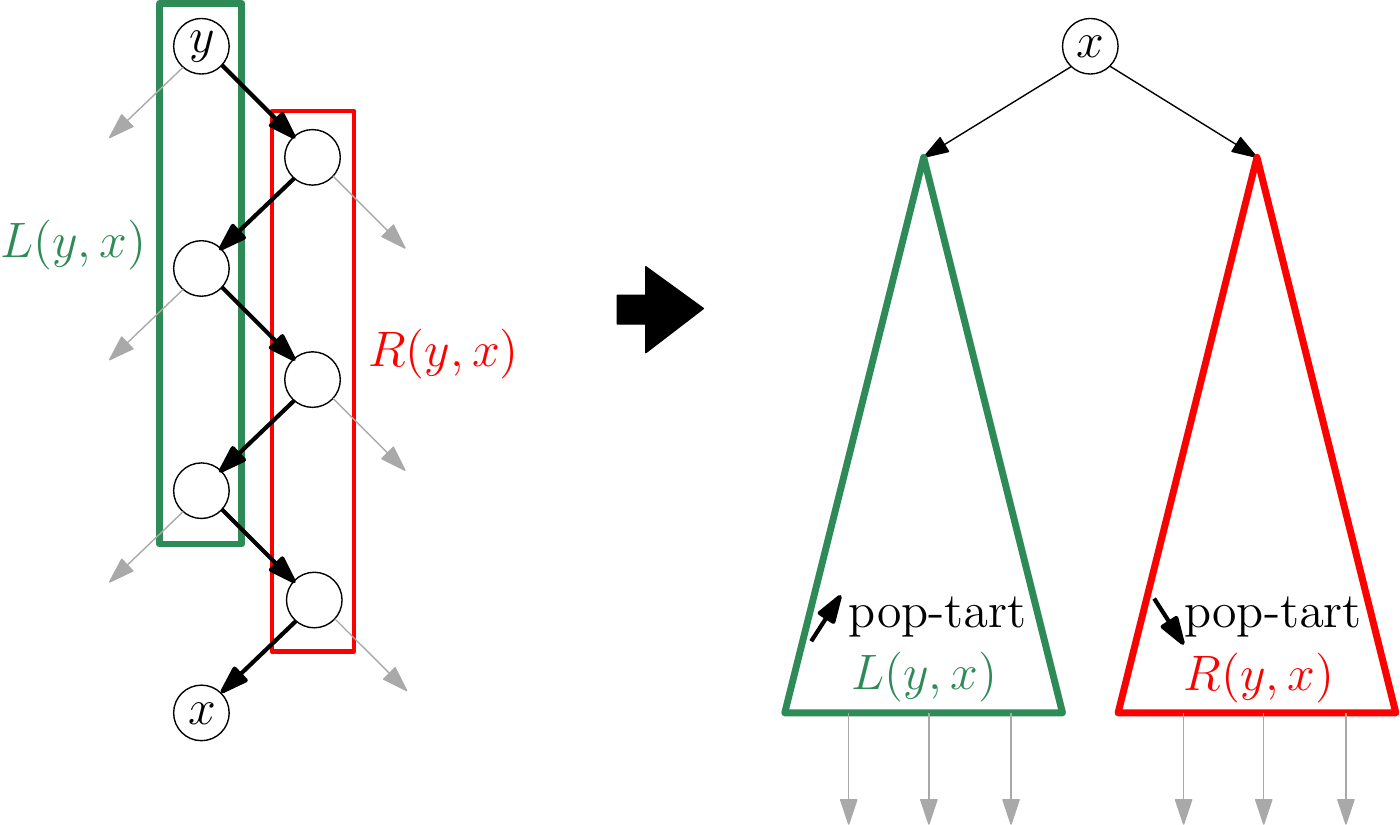}
\end{center}
\caption{Representing a heavy path with Pop-tarts.}
\label{fig:double-pop-tart}
\end{figure}
Each node on the path is connected via a dotted edge to a subtree which
will be considered as a leaf in the pop-tart, whose weight is exactly
the total weight of all the nodes in that subtree. The subtrees
contained in those leaves will be structured in the same manner,
recursively.  The nodes in the tree will contain two extra bits, one to
determine if the edge to its parent node is solid or dotted, and another
to determine if the next node on its heavy path is in $L(y,x)$ or
$R(y,x)$.

When the finger $f$ is not at the root $r$ of the tree, the path from
the finger to the the root is also represented as a pair of pop-tarts in
a similar way, but this time upside-down (see Fig.~\ref{fig:finger}).
\begin{figure}
\begin{center}
\includegraphics[scale=0.65]{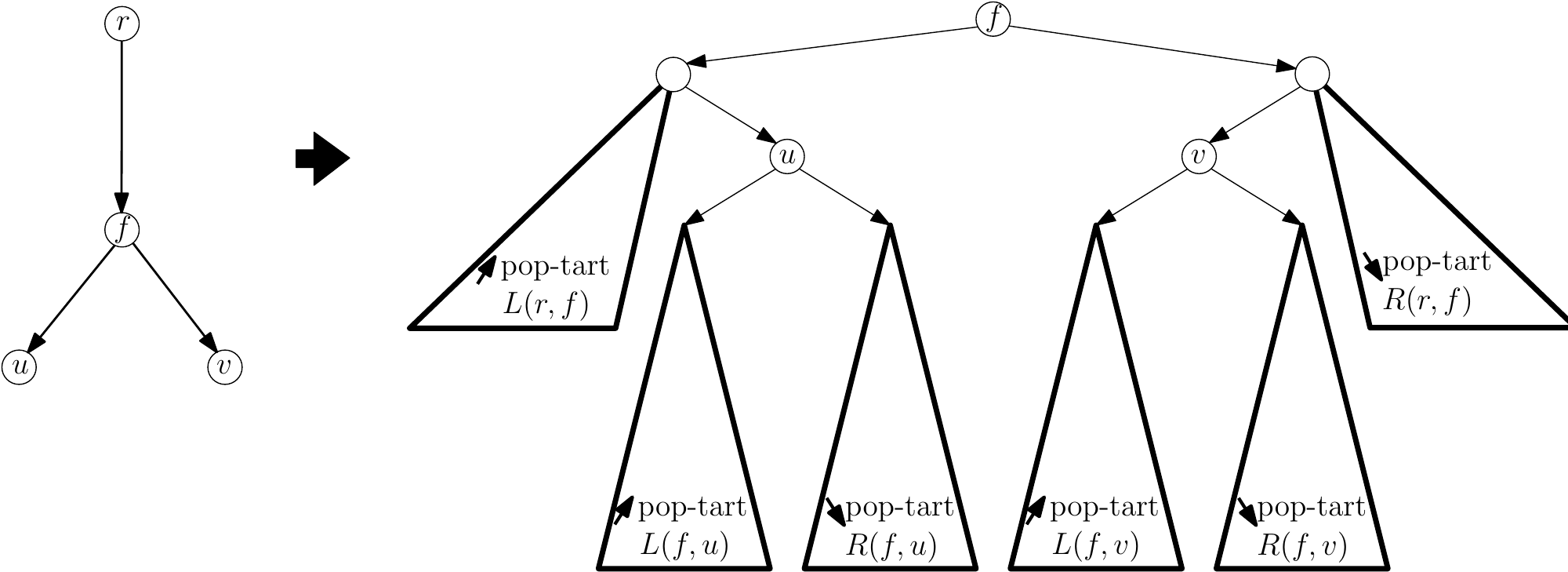}
\end{center}
\caption{Representing the finger in general position.}
\label{fig:finger}
\end{figure}
Thus, as $f$ walks down, the elements of $L(r,f)$ are added in
increasing order, and the elements of $R(r,f)$ are added in decreasing
order.  Hence, finger movements in the original BST algorithm can be
implemented using one push and one pop operation by transferring a node
from one pop-tart to the other using $O(1)$ rotations. Likewise,
rotations in the original BST algorithm only involve the first few nodes
on the pop-tarts linked from the finger, and thus can be implemented in
$O(1)$ rotations and push/pop operations. Note that the finger in the
tree maintained by our simulation always stays at the root.

Any path from the root to a node $x$ of weight $w$ uses at most $\log{W/w}$
dotted edges. Further, let $W_1, W_2, \ldots, W_k$ be the total weights of
the successive heavy paths (along with their descendants) on the path from
the root to $x$. By Lemma~\ref{lem:chocolate-pop-tart}, the $i^{th}$ heavy
path will be stored at depth $O(\log(W_{i-1}/W_i))$ in the pop-tart of the
$(i-1)^{th}$ heavy path, and node $x$ will be at depth $O(\log(W_{k}/w))$
in the pop-tart of the last heavy path.  Thus, the total depth of $x$ in
the tree is bounded by a telescoping sum that sums up to $O(\log(W/w))$.
Clearly, if $\A$ is online, so is $\A'$. We obtain:

\begin{theorem}\label{thm:simulation}
Given a BST algorithm $\A$ with a starting tree $T$, there is a BST
algorithm $\A'$ with a starting tree $T'$ such that $|\A'(S)| =
O(|\A(S)|)$,
and such that the depth of a node $i$ in $T'$ is always $O(\log (W/w_i))$
and the finger is always at the root of $T'$. If $\A$ is online, so is
$\A'$.
\end{theorem}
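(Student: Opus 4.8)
The plan is to assemble Theorem~\ref{thm:simulation} directly from the machinery already in place, namely the heavy-path (biased) decomposition of $T$ together with the Chocolate Pop-tart of Lemma~\ref{lem:chocolate-pop-tart}. First I would fix the representation: describe $T'$ as the tree obtained from $T$ by replacing each heavy path with the pair of chocolate pop-tarts holding $L(y,x)$ and $R(y,x)$ (hung as the left and right children of the path-endpoint $x$), with the dotted-edge subtrees appearing as weighted leaves of those pop-tarts and recursively expanded the same way; when the finger is away from the root, the root-to-finger path is represented by an extra, upside-down pair of pop-tarts as in Figure~\ref{fig:finger}. I would note that the two extra bits per node suffice to record solid/dotted parent status and $L$-vs-$R$ membership, and that $T'$ is a legal BST because the chocolate pop-tart preserves in-order key order within each path segment and the heavy-path decomposition is order-consistent between segments.

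Next I would verify the cost simulation $|\A'(S)| = O(|\A(S)|)$. Each unit-cost BST operation of $\A$ — a finger move to a parent or child, or a rotation between the finger and its parent — is simulated by $O(1)$ push/pop operations on the relevant pop-tarts plus $O(1)$ rotations in $T'$. A finger move is one pop (removing the first node of one pop-tart on the root-to-finger path) and one push (inserting it into the partner pop-tart), exactly because walking the finger down adds nodes to $L(r,f)$ in increasing order and to $R(r,f)$ in decreasing order, matching the "decreasing key order'' push convention (after flipping) that the pop-tart requires; a rotation in $\A$ touches only the top few nodes of these pop-tarts and is likewise $O(1)$ push/pop operations. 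Since each good pop-tart does push/pop in $O(1)$ amortized time, summing over the whole sequence gives $|\A'(S)| = O(|\A(S)|)$. I would also note that whenever $\A$ returns its finger to the root of $T$, our simulation has its finger at the root of $T'$, which is the invariant claimed; and that online-ness is immediate since each step of $\A'$ is produced from the corresponding step of $\A$ without lookahead.

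Then I would bound the depth of node $i$ in $T'$. Along the root-to-$i$ path in $T$ there are at most $\log(W/w_i)$ dotted edges, so the path passes through some sequence of heavy paths with decreasing total subtree weights $W = W_0 \ge W_1 \ge \cdots \ge W_k \ge w_i$. By Lemma~\ref{lem:chocolate-pop-tart} (crazy goodness, which holds here because leaf weights are arbitrary), inside the pop-tart representing the $(j-1)^{st}$ heavy path the leaf that leads to the $j^{th}$ heavy path — a leaf of weight $W_j$ in a pop-tart of total weight $W_{j-1}$ — sits at depth $O(\log(W_{j-1}/W_j))$, and finally node $i$ itself sits at depth $O(\log(W_k/w_i))$ in the last pop-tart. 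Adding these contributions gives a telescoping sum $\sum_{j=1}^{k} O(\log(W_{j-1}/W_j)) + O(\log(W_k/w_i)) = O(\log(W/w_i))$, and the additive $O(1)$'s per pop-tart are absorbed because there are only $O(\log(W/w_i))$ of them (each costing at least one dotted edge, hence a weight halving). The same estimate applies to the upside-down root-to-finger pop-tarts, so the depth bound holds at all times, not just when the finger is at the root.

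The main obstacle is the bookkeeping of the finger in general position: I must check that both the structural manipulations and the depth bound survive the presence of the extra pair of pop-tarts encoding the path from the root to the current finger. Concretely, one has to confirm that a rotation in $\A$ between the finger's node and its parent corresponds to a constant-size surgery at the tops of these pop-tarts (including the boundary cases where the finger crosses from one heavy path to an adjacent one, which may split or merge the $L$/$R$ pop-tarts of a path), and that reattaching a just-completed heavy-path representation below the finger does not spoil the telescoping weight bound. These are all local, constant-size operations, so the bounds go through, but stating the case analysis cleanly — rather than the underlying inequalities, which are routine given Lemma~\ref{lem:chocolate-pop-tart} — is where the real work of the proof lies.
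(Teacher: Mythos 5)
Your construction is essentially the paper's: a heavy-path decomposition of $T$, chocolate pop-tarts storing the $L$ and $R$ sides of each heavy path (hung as children of the path endpoint $x$) with the dotted-edge subtrees appearing recursively as weighted leaves, an upside-down pair of pop-tarts encoding the root-to-finger path, $O(1)$ amortized push/pop work per simulated BST operation of $\A$, the telescoping-sum depth bound via Lemma~\ref{lem:chocolate-pop-tart}, and online-ness falling out for free. On all those points you match the paper.

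One claim you state incorrectly: you write that the finger invariant is ``whenever $\A$ returns its finger to the root of $T$, our simulation has its finger at the root of $T'$,'' and call this ``the invariant claimed.'' The theorem claims something stronger --- that $\A'$'s finger is at the root of $T'$ after \emph{every} simulated step, no matter where $\A$'s finger currently sits in $T$. This is exactly what the upside-down root-to-finger pop-tarts buy: $\A$'s finger position is encoded in the data structure itself, not tracked by moving $\A'$'s physical finger, and the pop-tart convention (the finger returns to the root after each push/pop restructuring) guarantees that $\A'$'s finger is always back at the root of $T'$. As written, your conditional formulation does not establish the theorem's assertion and is a genuine (though easily repaired) gap in what you prove.

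Your flagging of the ``main obstacle'' --- casing out precisely how a rotation at $\A$'s finger, especially across a heavy-path boundary where the $L/R$ pop-tarts split or merge, decomposes into $O(1)$ pop-tart operations --- is accurate; the paper also asserts this without spelling out the cases, so on that point your level of rigor matches the paper's.
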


If the simulation needs to be implemented in a real-world BST, $O(1)$ extra
bits per node is sufficient for storing the structure of
the original tree and the function of each node in the simulation:
each node needs to indicate wether each of its 
children is part of the same heavy path or not, and for all nodes on
the path from $r$ to $f$, a bit must be stored to determine if the
next node on the path is stored in $L(r,f)$ or in $R(r,f)$.
Note that if $T$ is unbalanced, it is necessary to restructure it into
$T'$ in order to obtain the depth bound. However if the starting
position of $T$ already have this property, we can start with
$T$ unchanged and restructure the tree
during the execution of the algorithm every time the finger enters a
yet unexplored subtree.

\section{De-amortization}\label{sec:de-amortization}
We are now ready to show how to de-amortize BST algorithms.

\begin{theorem}\label{thm:offline}
For any BST algorithm $\A$ with a starting tree $T$ there is a BST
algorithm $\A''$ with a starting tree $T''$ such that for any access
sequence $S$, $|\A''(S)| = O(|\A(S)|)$ and each access to a node is
performed in $O(\log n)$ operations worst case. If $\A$ is online, so is $\A''$.
\end{theorem}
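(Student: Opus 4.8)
The idea is to combine Theorem~\ref{thm:simulation}, which already gives us a tree $T'$ of logarithmic height running within a constant factor of $\A$'s cost, with a standard de-amortization trick: run the balanced simulation ``in the background'' while serving each incoming access quickly on a separate balanced structure, then periodically fold the background work back in. Concretely, I would maintain two copies (or phases) of the simulated tree $T'$ from Section~\ref{sec:simulation}: the ``live'' copy on which the current access is actually performed, and a ``shadow'' copy on which we replay $\A$'s operations at a controlled, amortized-constant rate. Because $T'$ always has depth $O(\log n)$ and the finger is always at the root (by Theorem~\ref{thm:simulation}), \emph{any} access to a key $x$ can be completed in $O(\log n)$ worst-case BST operations simply by walking from the root down to $x$ — provided the tree we walk in is in a ``clean'' state, i.e.\ actually equals the simulated $T'$ at some well-defined point in the replay.

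First I would make precise the notion of a checkpoint: pick a granularity parameter (a constant), and chop the sequence $\A(S)$ of simulated operations into blocks; between two consecutive accesses handled by $\A$, the simulation performs some number $c_j$ of BST operations, and we want to distribute that work evenly over future accesses so that each access does only $O(1)$ background simulation steps on average but $O(\log n)$ in the worst case (the $O(\log n)$ being needed to finish the down-walk to the requested key regardless). Second, I would describe the switchover: whenever the shadow copy has fully caught up to the state of $T'$ after $\A$ has processed the current prefix of $S$, we promote it to live. Since $T'$ has $O(\log n)$ depth, copying the relevant $O(\log n)$-size root-to-finger region — or more cleanly, resetting the finger to the root and re-deriving the needed portion — costs $O(\log n)$, chargeable to the access. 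Third, I would verify the cost accounting: the total work is $|\A'(S)| + O(|S|\log n)$ background plus per-access down-walks; since $\A$ must pay at least $1$ per access and the balance theorem type reasoning gives $|\A'(S)| = O(|\A(S)|) = \Omega(|S|)$, the extra $O(|S|\log n)$ is $O(|\A(S)|)$ only if $|\A(S)| = \Omega(|S|\log n)$, which need \emph{not} hold — so this is where care is needed, and the right move is to run the fast balanced search \emph{lazily}, charging the $O(\log n)$ down-walk to the access only when it is actually requested and otherwise letting the simulation itself provide the path. The online property is preserved because every ingredient (the simulation $\A'$, the block decomposition, the switchover) depends only on the prefix of $S$ seen so far.

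The main obstacle, and the part I expect to require the most delicate argument, is reconciling two conflicting demands on the data structure during the ``catch-up'' interval: the live tree must simultaneously (a) have been left by the previous access in a state from which the current key is reachable in $O(\log n)$ steps, and (b) be advanced, a few operations at a time, toward the configuration $\A$ would be in, without ever letting the finger stray so far from the root that a future access could be forced to pay more than $O(\log n)$. The cleanest way around this is precisely the structure we already built: because the simulation of Section~\ref{sec:simulation} keeps the finger \emph{at the root at all times} and the tree at depth $O(\log n)$, the ``state'' that matters between accesses is just the shape of $T'$, and advancing the shadow copy is a sequence of rotations that each keep depth $O(\log n)$. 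So I would phrase the de-amortization entirely in terms of the balanced simulation, never touching the original unbalanced $T$: maintain a queue of pending simulated operations, drain $O(1)$ of them per access, and additionally perform the actual down-and-up traversal for the requested key (which is $O(\log n)$ worst case and, summed, $O(\log n)$ times the number of accesses, absorbed into the $O(|\A'(S)|)=O(|\A(S)|)$ bound after one observes that each access contributes $\Omega(\log n)$ to \emph{every} correct BST algorithm only in the worst case — hence more carefully, the traversal cost telescopes against the simulation's own traversal cost for the same access). Wrapping up, I would note that when $\A$ is online the queue-draining schedule is causal, so $\A''$ is online, completing the proof.
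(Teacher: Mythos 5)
Your high-level plan---apply Theorem~\ref{thm:simulation} to get a balanced $T'$ with the finger always at the root, then de-amortize by interleaving $\A'$'s operation stream with fast explicit traversals---is exactly the paper's approach, and you correctly spot the subtle accounting issue: the extra traversal cost of $O(|S|\log n)$ is not obviously $O(|\A(S)|)$. However, your proposed fix does not close that gap. You suggest the down-walk cost ``telescopes against the simulation's own traversal cost for the same access,'' but $\A'$ may serve $s_i$ in $O(1)$ operations (the node can sit near the root of $T'$), so that charge target can be far too small. The correct charging rule, which the paper uses, is timing-based rather than per-access-based: execute $\A'$'s operation stream, and \emph{only} when $c\log n$ consecutive $\A'$ operations have elapsed without $\A'$ touching the next pending element do you insert a round trip of at most $2c\log n$ operations (the tree has depth $O(\log n)$ and the finger is at the root, so the round trip is well-defined and leaves the tree unchanged). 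That round trip is then charged to the $c\log n$ triggering operations of $\A'$---regardless of which earlier keys those operations were serving---which immediately gives $|\A''(S)| \le 3|\A'(S)|$ and a worst case of $3c\log n$ per access. Without this trigger-and-charge rule, nothing in your write-up bounds the total traversal cost.

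Two smaller remarks. First, ``drain $O(1)$ of them per access'' should be $\Theta(\log n)$; with an $O(1)$ drain rate the per-access worst case is still fine, but the argument for $|\A''(S)| = O(|\A(S)|)$ collapses as above. Second, the live/shadow two-copy scheme is unnecessary overhead here: because $\A'$ keeps the finger at the root and the detour returns it there without changing the shape of $T'$, a single tree suffices and the two processes interleave directly. Maintaining an explicit queue of pending keys is the machinery needed in Theorem~\ref{thm:online}, where the output algorithm must also run in the pointer machine model with $O(1)$ words of working storage; your proposal borrows that heavier apparatus into the easier setting of Theorem~\ref{thm:offline}, where it is not required.
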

\begin{proof}
Using Theorem~\ref{thm:simulation}, transform $\A$ and $T$ into $\A'$ and
$T'$ such that the depth of node $i$ in $T'$ is always $c\log n$ for some
constant $c$.  Algorithm $\A'$ is then modified in the following way: while
running the sequence of operations in $\A'(S)$, every time $c\log n$
operations from the original $\A'(S)$ sequence have been performed without accessing the next unaccessed
element of the input sequence, access this element by moving the finger to it
and back (thereby inserting $\leq 2c\log n$ extra BST operations into the sequence at this point).
Thus every access is performed in worst case $3c\log n$, and the total cost of the sequence is
the same within a factor $3$. If $\A$ is online, so is $\A''$.
\end{proof}

Again, it is usually necessary to transform the starting tree in order to
achieve a $O(\log n)$ worst case bound per access, for example in the case
when $T$ is very unbalanced.  If the starting tree $T$ has height $O(\log
n)$ however, it is not necessary to modify it and we can, as described
above, restructure the tree during the execution of the algorithm every
time the finger enters a yet unexplored subtree.

As mentioned in the introduction, real-world BST algorithms normally work
in the pointer machine model, with working space for the algorithm being
$O(1)$ words of information in the nodes of the tree, and $O(1)$ global
working variables. Additionally, they can be implemented to find their BST
operations for a key~$s_i$ in time proportional to the number of these
operations (i.e., in time proportional to the cost in the BST model).

A natural goal is that our de-amortized output algorithm should adhere to
these constraints if the input algorithm does. We now describe how to do
this, given bounds on the amortized behaviour of the input BST
algorithm. In particular, we show how any real-world online BST algorithm
with $O(\log n)$ amortized time bounds (such as e.g.\ Splay Trees) can be
transformed into an online BST algorithm with $O(\log n)$ worst case time
bounds, while not changing their running time on any sequence by more than
a constant factor. In the following theorem, the formulation is slightly
more general.

\begin{theorem}\label{thm:online}
Let $f(n)$ be a function in $\Omega(\log n)$. For any online BST algorithm
$\A$ that for some starting tree $T$ is guaranteed to perform $k$ accesses
in $O(n f(n) + k f(n))$ operations, there is an online BST algorithm
$\A'''$ and a starting tree $T'''$ such that $|\A'''(S)| = O(|\A(S)|)$ for
any access sequence $S$, and such that $\A'''$ performs each access to a
node in $O(f(n))$ operations worst case.
If $\A$ works in the pointer machine model, with working space being $O(1)$
words of information in the nodes and $O(1)$ global working variables, and
computes each access to a key in time proportional to the number of BST
operation of the access, then so does $\A'''$.
\end{theorem}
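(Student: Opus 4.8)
The proof of Theorem~\ref{thm:online} should combine the simulation of Theorem~\ref{thm:simulation} with the de-amortization trick of Theorem~\ref{thm:offline}, but now executed under the pointer-machine / $O(1)$-words-per-node / output-sensitive constraints, and with the twist that we only know an \emph{amortized} bound of the form $O((n+k)f(n))$ rather than a per-access bound. The plan is to first argue that the simulation construction of Section~\ref{sec:simulation} already respects the resource constraints (this is essentially the content of the paragraph following Theorem~\ref{thm:simulation}: the heavy-path decomposition plus the pop-tart bookkeeping needs only $O(1)$ extra bits per node, and all pop-tart restructurings are done with $O(1)$ local pointer operations per rotation), so that $\A'$ inherits the ``real-world'' features of $\A$. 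Then I would overlay the de-amortization layer on top of $\A'$.

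\textbf{The de-amortization layer.}
The idea is to run $\A'$ but interleave it with forced accesses, exactly as in Theorem~\ref{thm:offline}: maintain a counter of BST operations performed since the last time the ``current'' query key $s_i$ was touched, and as soon as that counter reaches $c f(n)$ (where $c\log n$ is the depth bound guaranteed by Theorem~\ref{thm:simulation}, and $f(n)=\Omega(\log n)$ so $c\log n = O(f(n))$), suspend the simulation, walk the finger from the root down to $s_i$ and back up (costing $\le 2c\log n = O(f(n))$ operations since $T'$ has logarithmic depth), and only then resume. Each access then provably finishes within $O(f(n))$ worst-case operations. The counter and the ``which key are we currently serving'' pointer are $O(1)$ global variables, and the forced walk is a trivial pointer-machine traversal, so the extra machinery stays within the resource budget.

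\textbf{Handling the amortized-only guarantee (the main obstacle).}
The delicate point — and I expect this to be the crux of the proof — is that the de-amortization layer inserts extra forced accesses, which lengthens the sequence actually fed to $\A'$, and a priori this could blow up $\A'$'s cost. Here is where the hypothesis ``$k$ accesses in $O((n+k)f(n))$'' is used: the forced accesses at most \emph{double} the number of accesses (we insert at most one forced access per real access $s_i$, namely when the simulation is slow to reach it), so $\A'$ is asked to serve $O(k)$ accesses total on an $n$-node tree, hence spends $O((n+k)f(n))$ operations; combined with the $O(f(n))$ per-access overhead of the forced walks, the total is still $O((n+k)f(n)) = O(|\A(S)|)$ up to constants, using that $|\A(S)| = \Omega(k)$ always and the assumed upper bound relates $|\A(S)|$ to $(n+k)f(n)$ only in one direction — so I must be a little careful and instead argue: $|\A'''(S)| = O(|\A'(S')|) + O(k f(n))$ where $S'$ is $S$ with forced duplicates inserted, and $|\A'(S')| = O(|\A(S')|) = O((n + |S'|)f(n)) = O((n+k)f(n))$, and we also need a matching \emph{lower} bound $|\A(S)| = \Omega((n+k)?)$ — which does not hold in general. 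The clean fix is to only turn on forced accesses when the natural simulation is genuinely behind, so that the number of forced accesses is bounded not by $k$ but by (total operations of $\A'$ on $S$)$/(c\log n)$, i.e.\ by $|\A'(S)|/\Theta(f(n)) = O(|\A(S)|/f(n))$; then each costs $O(f(n))$, so the forced accesses add only $O(|\A(S)|)$ in total, and $\A'$'s own cost on the augmented sequence is still $O(|\A(S)|)$ because each forced access is a no-op from $\A$'s logical standpoint (the forced walk does not change the tree) and merely resets the finger, costing $O(f(n))$ which we already charged. I would write the argument in this second form, which avoids needing any lower bound on $|\A(S)|$ and only uses the stated $O((n+k)f(n))$ bound to guarantee that $f(n)=\Omega(\log n)$ is large enough to absorb the logarithmic-depth walks — the role of the hypothesis being precisely to rule out pathological algorithms whose amortized per-access cost is below $\log n$, for which no $O(f(n))$ worst-case reshuffling could keep up.

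\textbf{Finishing and online-ness.}
Finally I would check that $\A'''$ is online: it produces the BST operations for $s_i$ (including any forced walk) before seeing $s_{i+1}$, since the forced-access mechanism only ever acts on the current unaccessed key, and $\A'$ is online by Theorem~\ref{thm:simulation}; hence $\A'''(S)$ is a prefix of $\A'''(S')$ whenever $S$ is. The pointer-machine/space/output-sensitivity claims then follow by assembling the two observations above: $\A'$ inherits them from $\A$ (Section~\ref{sec:simulation} bookkeeping is $O(1)$ bits per node and output-sensitive), and the de-amortization overlay adds only $O(1)$ global variables and output-sensitive finger walks. $\Box$
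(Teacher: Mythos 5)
Your proposal correctly identifies that the construction should combine the simulation of Theorem~\ref{thm:simulation} with the forced-access trick of Theorem~\ref{thm:offline}, and your cost-charging sketch (charge each forced access to the $c\,f(n)$ units of $\A'$ work that triggered it) is essentially sound. However, there is a genuine gap in the resource analysis, which is precisely the difficulty this theorem is about. You write that ``the counter and the `which key are we currently serving' pointer are $O(1)$ global variables\ldots so the extra machinery stays within the resource budget.'' This does not account for the case where $\A'$ falls behind by \emph{many} accesses. A single access in $\A'$ may require up to $\Theta(n f(n))$ BST operations (the $O((n+k)f(n))$ bound is only amortized), so while $\A'$ is still churning on $s_i$, the algorithm $\A'''$ can be forced to serve $s_{i+1},\dots,s_{i+\Theta(n)}$ by direct walks. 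Since $\A$ (hence $\A'$) is an \emph{online} algorithm with $O(1)$ words of global state, it must be fed the keys one at a time in order; $\A'''$ therefore needs to remember the entire backlog of unprocessed keys so it can feed them to $\A'$ later, and that backlog cannot be stored in $O(1)$ global variables. The paper's proof resolves this by storing a queue of pending keys \emph{inside the tree itself}, using $O(1)$ words per node (each queue entry lives in a tree node as a key plus a ``pointer'' encoded by another key), accepting that queue operations then cost $O(\log n)$ finger-walk time, and then runs a careful five-case (\textbf{AB}, \textbf{ABC}, \textbf{AC}, \textbf{B}, \textbf{BC}) charging argument to show all this extra bookkeeping is still $O(|\A'(S)|)$. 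Your proposal has no analogue of this in-tree queue, so it does not actually establish the pointer-machine/$O(1)$-space claim.

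Relatedly, you misidentify the role of the $O((n+k)f(n))$ hypothesis: you suggest it is there ``to guarantee that $f(n)=\Omega(\log n)$ is large enough to absorb the logarithmic-depth walks,'' but that is already a separate explicit hypothesis. The amortized bound is used in the paper for a different purpose: to prove that the in-tree queue, which has room for only $n$ pending keys, never overflows. If $n$ consecutive forced accesses had occurred without $\A'$ ever catching up, then at least $\frac{1}{2}d\,n\,f(n)$ units of $\A'$ work would have been executed, contradicting the assumed $O((n+k)f(n))$ total for $d$ chosen large enough. Without the queue in your construction, you never encounter (and hence never address) this overflow concern, which is another symptom of the missing ingredient.
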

\begin{proof}
The general idea is the same as in Theorem~\ref{thm:offline}, except
that $\log n$ is replaced by $f(n)$.
The main problem to overcome is that for some $s_i$'s, the number of BST
operations of $\A'$ may be larger than $f(n)$, due to the amortization
in~$\A$ (and the amortization added when transforming it into~$\A'$). These
BST operations cannot all be executed before the access to $s_i$ has to be
finished by a traversal of the balanced tree of~$\A'$ and $s_{i+1}$ has to
be served next by the online algorithm~$\A'''$. In short, in the execution
of~$\A'$, its point in the input sequence can lag behind that of~$\A'''$,
and the problem is how~$\A'''$ efficiently can keep track of what
operations to do next when executing~$\A'$.

We do this by maintaining a queue~$Q$ containing the keys whose accesses
have already been performed in $\A'''$ but whose BST operations in the
execution of $\A'$ still have to be done. Thus, $Q$ always contains a
(possibly empty) suffix of the keys $s_1,s_2,\dots,s_i$, where $s_i$ is the
key last accessed by~$\A'''$. The $\A'$ operations of the oldest key in~$Q$
may be partly executed, and we store the state of the process of $\A'$ on
that key in the global variables of $\A'''$ (we assume such a state can be
stored in $O(1)$ words for~$\A$, which implies that it also can be done
for~$\A'$).

To adhere to our notion of practical BST algorithms, we implement the
queue by a linked list of queue nodes, with each node of the list stored in a node of
the tree as a pair of words. The first word is the key stored at that
position in the queue and the second word is a list pointer to the next
queue node, represented by the key of the tree node containing
that queue node.  Note that following a list pointer may require walking
$O(\log n)$ steps in the tree and so enqueue or dequeue operations will
cost that much.

We now give the details of~$\A'''$. It uses the following three basic
routines.
\textbf{A:} Restart the $\A'$ process of the last key in~$Q$, and then
perform $\A'$ process work on this and the following keys~$Q$, doing a
dequeue each time the process for the last key finishes. The routine
ends when $\A'''$ has done $d \cdot f(n)$ BST operations, or $Q$ runs
empty. Here, $d$ is a constant to be determined later.
\textbf{B:} Perform $\A'$ BST operations on the newest key of $\A'''$. The
routine ends when $f(n)$ such operations have been performed, or the
operations have all been done.
\textbf{C:} Access the newest key of $\A'''$ by a search in the tree
maintained. Enqueue the key in~$Q$.

Given these routines, the actions of~$\A'''$ on the next input key are:
\begin{quote}
\textbf{IF} $Q$ is not empty:\\
\mbox{}~~~~do \textbf{A}\\
\textbf{IF} $Q$ is [now] empty:\\
\mbox{}~~~~do \textbf{B}\\
\mbox{}~~~~\textbf{IF} routine \textbf{B} ended by all operations being
done: exit [skipping \textbf{C} below]\\
do \textbf{C}
\end{quote}

This takes $O(f(n))$ time worst case, since each of the routines
do. We now want to argue that $|\A'''(S)| = O(|\A'(S)|)$ on any
sequence~$S$, by charging all work of~$\A'''$ to work of $\A'$ that has
been executed. There are five different types of actions of~$\A'''$
possible, with routine sequences as follows: \textbf{AB}, \textbf{ABC},
\textbf{AC}, \textbf{B}, and \textbf{BC}.

The action \textbf{B} starts and ends with $Q$ empty, and takes time
proportional to the work done on $\A'$, hence that work can be charged.
The action \textbf{BC} starts with $Q$ empty, ends with $Q$ non-empty,
takes time $O(f(n))$, and does $f(n)$ work on $\A'$, hence that work
can be charged.
The action \textbf{ABC} starts with $Q$ non-empty, ends with $Q$
non-empty, has $Q$ empty in the meantime, takes time $O(f(n))$ and
does $f(n)$ work on $\A'$, hence that work can be charged.
The action \textbf{AB} starts with $Q$ non-empty, ends with $Q$ empty,
takes time $O(f(n))$ but does possibly only $O(1)$ work on
$\A'$. However, the action (either \textbf{BC} or \textbf{ABC}) during
which $Q$ last turned from empty to non-empty did $f(n)$ work on $\A'$,
hence that work can be charged.

Remaining is the action \textbf{AC}, which has $Q$ non-empty from start
to end. Note that in the \textbf{A}~part, the $d\cdot f(n)$ BST
operations of $\A'''$ for $f(n) = \log n$ can be a couple of dequeues
(each taking $\log n$ operations), all of keys for which there are a
constant amount of $\A'$ work. Hence, we cannot charge the $\A'$ work on a
per-action basis, and a more elaborate charging argument is needed:
Consider a sequence of $t$ \textbf{AC} actions, following an action
(either \textbf{BC} or \textbf{ABC}) during which $Q$ last turned from
empty to non-empty. There have been exactly $t+1$ elements enqueued
(during \textbf{C} parts) since the queue was last empty, of which (at
least) the last is still present. Hence, during the $t$ \textbf{AC}
actions at most $t$ dequeues can have been done. Also, exactly $t$
restarts, $t$ key accesses via the balanced tree, and $t$ enqueues have
been done. All these sum up to at most $c t \cdot \log n$ BST operations
for $\A'''$, for some constant $c$. Let $c'$ be a constant such that
$f(n) \ge c' \cdot \log n$. At least $d t \cdot f(n) \ge d c'
t \cdot \log n$ BST operations have been done by $\A'''$ during the
$t$ actions, and those not included in the sum above must be $\A'$
process work. Thus, by choosing $d$ large enough that $d c' \ge
2c$, at least half of the work done must be $\A'$ process work. Hence, we
can charge all $\A'''$ work to $\A'$ work executed during the $t$ actions.

Summing up, over the entire sequence, all work of $\A'''$ can be charged
to executed work of $\A'$, with no work of $\A'$ being charged more than a
constant number of times.  Hence, $|\A'''(S)| = O(|\A'(S)|)$. By
$|\A'(S)| = O(|\A(S)|)$ from Theorem~\ref{thm:simulation}, the claim
$|\A'''(S)| = O(|\A(S)|)$ of Theorem~\ref{thm:online} follows.

Finally, as the queue has only room for $n$ elements, we need to
guarantee that the queue does not overflow. The queue overflows exactly
when $n$ \textbf{AC} actions in a row have taken place. By the argument
above, at least $1/2 \cdot d n \cdot f(n)$ $\A'$ work has been
executed, which for $d$ large enough leads to a contradiction with the
guarantee on $\A$'s performance.

\end{proof}

We note that one feature of BST algorithms not maintained by
Theorem~\ref{thm:online} is the exact amount of information stored in tree
nodes, besides the search key. For classical BST algorithms, this varies
from zero bits in Splay trees, over one bit in red-black trees, two bits in
AVL-trees, to $\Theta(\log n)$ bits in weight-balanced trees and treaps. Algorithm
$\A'''$ from Theorem~\ref{thm:online} always uses $\Theta(\log n)$ bits.


\bibliographystyle{abbrv}
\bibliography{splay}

\end{document}